\begin{document}

\title{Quantum Modeling of Spatial Contiguity Constraints}

\author{Yunhan Chang$^{1,2}$}
 \affiliation{%
   \department{$^1$Department of Computer Science and Engineering}
   \position{$^2$Center for Geospatial Sciences}
   \institution{University of California, Riverside}\country{Riverside, CA, USA}
 }
 \email{ychan268@ucr.edu}

 \author{Amr Magdy$^{1,2}$}
 \affiliation{%
   \department{$^1$Department of Computer Science and Engineering}
   \position{$^2$Center for Geospatial Sciences}
   \institution{University of California, Riverside}\country{Riverside, CA, USA}
 }
 \email{amr@cs.ucr.edu}

 \author{Federico M. Spedalieri$^{3,4}$}
 \affiliation{
  \department{$^3$Department of Electrical and Computer Engineering}
  \position{$^4$Information Sciences Institute}
  \institution{University of Southern California}\country{Los Angeles, CA, USA}
 }
 \email{fspedali@isi.edu}


\begin{abstract}
Quantum computing has demonstrated potential for solving complex optimization problems; however, its application to spatial regionalization remains underexplored.
Spatial contiguity, a fundamental constraint requiring spatial entities to form connected components, significantly increases the complexity of regionalization problems, which are typically challenging for quantum modeling. This paper proposes novel quantum formulations based on a flow model that enforces spatial contiguity constraints. Our scale-aware approach employs a Discrete Quadratic Model (DQM), solvable directly on quantum annealing hardware for small-scale datasets. In addition, it designs a hybrid quantum-classical approach to manage larger-scale problems within existing hardware limitations. This work establishes a foundational framework for integrating quantum methods into practical spatial optimization tasks.
\end{abstract}

\maketitle

\setlength{\footskip}{45pt} 

\thispagestyle{fancy}
\fancyfoot[C]{} 
\fancyfoot[R]{This paper has been revised and accepted at the Q-Data 2025 Workshop, in conjunction with ACM SIGMOD 2025.}

\section{Introduction}

\begin{figure}[t]  
    \centering
\includegraphics[width=0.8\linewidth]{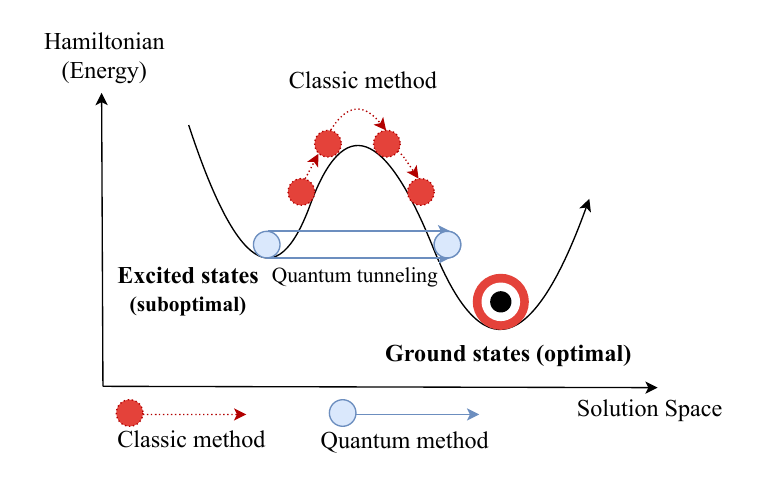} 
    \caption{Quantum Annealing Process vs. Classic Method}
    \Description{A figure illustrating the process of quantum annealing, showing potential energy landscapes and quantum advantage over classic method.}
    \label{fig:qa_process}
\end{figure}

Quantum annealing~\cite{kadowaki1998quantum, Santoro2006QA, Finnila1994, Viladomat_Jasso_2023, Kumar2018QC} is a computational technique inspired by quantum mechanics, designed to solve complex optimization problems by exploring a vast solution space through quantum fluctuations. Unlike classical simulated annealing~\cite{Kirkpatrick1983, Koshka2020DWave}, which is based on thermal fluctuations, quantum annealing uses quantum tunneling to escape the local optimum and potentially converge on the global optimum. This process involves initializing a quantum system in a superposition state that fluctuates between 0 and 1, then reducing quantum effects until the system settles into a low-energy configuration, called a ground state, representing a solution.
Figure~\ref{fig:qa_process} shows a solution space for an arbitrary NP-hard optimization problem represented by an energy curve, called the Hamiltonian.
Every possible solution of the NP-hard problem is a point on that curve, and its solution quality corresponds to the overall energy of the underlying quantum system.
The Hamiltonian~\cite{RevModPhys.80.1061} comprises multiple local maxima (hills) and minima (valleys), each corresponding to a different candidate solution for that NP-hard problem.
In the solution space, a transverse field functions as a subtle fluctuating perturbation that allows the system to quantum tunnel like a marble through energy barriers rather than surmounting them via classical thermal activation~\cite{Kirkpatrick1983}.
In Figure~\ref{fig:qa_process}, reaching from a suboptimal solution of the NP-hard problem (an excited state represented by the first valley) to the optimal solution (a ground state represented by the second valley), a classical probabilistic method must surmount the hill and sample many states in between.
In contrast, the quantum annealer can approach the optimal solution while exploring a significantly smaller number of states through quantum tunneling.
Quantum annealing can be applied to a broad range of NP-hard problems that traditionally depend on classical heuristic methods, including spatial regionalization that is recognized as a spatial query in the database community~\cite{Liu2022PRUC,Kang2022EMP}.
Notably, several companies have developed hardware that implements a quantum annealer aimed at addressing a variety of optimization challenges~\cite{web:annealercompanies}.

In recent years, significant quantum hardware advancements have been made.
To the best of our knowledge, the most advanced quantum annealer is developed and maintained by \textit{D-Wave Quantum Inc.}, and it offers approximately \textbf{5000 qubits}~\cite{King2023Quantum}.
However, current systems remain constrained by notable limitations~\cite{King2023Quantum,quantum7010002} that render existing quantum computers primarily suitable for low-scale problems.
To address this challenge, D-Wave has developed hybrid solvers that integrate classical CPUs with Quantum Processing Units (QPUs) at different phases.
The classical computer handles problem decomposition, setting initial conditions, and refining results using well-established optimization methods, while the QPU leverages quantum annealing. These solvers, such as DQM~\cite{dwave_dqm}, CQM~\cite{dwave2022CQM}, and NL solvers~\cite{Osaba2024DWave}, can partition large-scale problems into smaller sub-problems, thereby allowing QPUs to solve them efficiently. These solvers are designed for constrained optimization problems that aim to minimize an arbitrary objective function.

Quantum annealing is applied in many optimization problems.
Clustering is a classic optimization problem that groups objects or vectors in d-dimensional space according to certain criteria.
Spatial regionalization, i.e., finding spatial regions, is a refined clustering technique that groups spatial polygons, rather than points, of small areas to compose larger regions.
Regionalization has been widely adopted across diverse domains~\cite{Yunfan2024Pyneapple-R, AlrashidLM23,Alrashid2022SMP, Kang2022EMP}, such as economics, e.g., to address imbalances in economic development~\cite{church2020p}, and urban planning, e.g., for resource allocation in urban construction~\cite{Guo2008}. 
Spatial regions aggregated by areas must be explicitly spatially contiguous, i.e., all areas within the same region form one connected component in space.
The regionalization problem is NP-hard~\cite{Liu2022PRUC}, and solving it optimally is intractable due to contiguity and combinatorial constraints~\cite{AlrashidLM23, Alrashid2022SMP,Kang2022EMP}.
Currently, classical heuristics still take long hours of runtime to produce a suboptimal objective value~\cite{AlrashidLM23,Liu2022PRUC,Kang2022EMP}.
Quantum annealers, with their inherent ability to tunnel through local minima, offer a new computational method that may be more effective in handling these complexities.
The first step toward achieving a quantum advantage in spatially driven problems is ensuring spatial contiguity among neighboring spatial entities.
While previous research has predominantly focused on enhancing the efficiency of classical clustering problems through quantum computing methods~\cite{Viladomat_Jasso_2023, Gemeinhardt2021}, the integration of spatial contiguity constraints remains an ignored topic.

In this paper, we present the first quantum model that ensures spatial contiguity constraints among neighboring spatial entities.
Without loss of generality, we use the p-region problem as an example that requires spatial contiguity.
However, our model is general and can be used to ensure spatial contiguity in any spatial application and among any set of spatial entities, such as electoral redistricting~\cite{Validi2022Political}, forest planning~\cite{Carvajal2013Forest}, site selections~\cite{Murray2023Land}, wildlife conservation~\cite{Murray2023Wildfire}, and habitat corridor design~\cite{StJohn2018Corridor}.

Ensuring spatial contiguity poses challenges due to the limitations of current quantum hardware: storing neighborhood relationships among spatial entities can quickly consume a large number of qubits to represent variables, thereby complicating the design of efficient quantum solutions.
A critical limitation arises when larger datasets comprise more spatial entities, such as area polygons in regionalization, leading to a more significant number of combinations of assignments that require additional variables, which poses a challenge to current QPUs with a maximum of 5000 qubits.

To address the scalability challenge, we propose two models that accommodate datasets of varying scales.
For small-scale datasets, we use a \textit{flow-based model} that uses neighborhood flow variables to enforce spatial contiguity, formulated using the Discrete Quadratic Model (DQM)~\cite{dwave_dqm}.
Our model reduces the number of variables by exploiting the low degree of spatial connectivity between real entities.
To handle larger datasets, we propose a hybrid approach that blends classical and quantum computations when it is infeasible to process the entire dataset directly on QPUs.
The classic computing is used for operations that neither make use of quantum optimization nor represent a performance bottleneck.
Complementarily, quantum phase primarily adjusts the region boundaries to explore alternative solutions, and the process is iteratively integrated to achieve optimal solution quality.

\section{Related Work}


\noindent \textbf{Spatial contiguity in classical regionalization}.
All variations of spatial regionalization problems require spatial contiguity between spatial entities in the same region~\cite{Aydin02112021,wei2021efficient,duque2011pregions, AlrashidLM23, Alrashid2022SMP,rae2011geography,duque2012maxp, Kang2022EMP, Yunfan2024Pyneapple-R, Liu2022PRUC}.
In classical computing, spatial contiguity is typically enforced through simple geometric operations that check geometric intersections, while it is seldom expressed as optimization constraints in mathematical models.
The latter is more expressive and easier to adapt from a quantum optimization perspective.
Various mathematical formulations have emerged to target different optimization objectives.
For example, both p-regions~\cite{duque2011pregions} and p-compact regions problems~\cite{rae2011geography} partition the space into p regions, but the former minimizes the inter-regional heterogeneity while the latter maximizes the spatial compactness of each region.

Former methods have been proposed to ensure spatial contiguity in regionalization, but none of them works well for quantum computers.
Duque et al. model a spatial regionalization problem~\cite{duque2011pregions} as a mixed integer programming (MIP) formulation by representing polygons and their attributes with distinct variables. However, MIP approaches for regionalization problems are expensive because they must consider every possible assignment of variables to achieve an exact solution for an NP-hard problem, leading to a rapid growth in the number of required variables.
This quickly becomes infeasible for larger datasets. Combined with the current limitations on the number of available qubits, MIP approaches are unsuitable for quantum processing units at scale.
Therefore, it is essential to carefully design both decision variables and auxiliary variables to reduce the problem's dimensionality.
This reduction is critical for developing a Discrete Quadratic Model (DQM)~\cite{dwave_dqm} on D-Wave or a Quadratic Unconstrained Binary Optimization (QUBO) formulation that is computationally tractable. 


\textbf{Quantum-based clustering.}
Regionalization is a clustering problem that groups spatial polygons.
The quantum computing literature has many research efforts in clustering problems among other machine learning operations~\cite{Zaman:2023sym}, such as quantum kNN clustering~\cite{Viladomat_Jasso_2023} approaches that encode medoids selection into a QUBO model~\cite{bauckhage2019qubo}, and quantum community detection~\cite{Muhuri2024Quantum} methods that optimize network modularity via an Ising formulation. Although these works demonstrate promise for point-based clustering tasks, they do not enforce additional complex constraints: spatial contiguity, which significantly makes regionalization more complicated than traditional clustering.

\vspace{4pt}

\begin{figure*}[t]
  \centering
  \begin{subfigure}{0.35\textwidth}
    \centering
    \includegraphics[width=\linewidth]{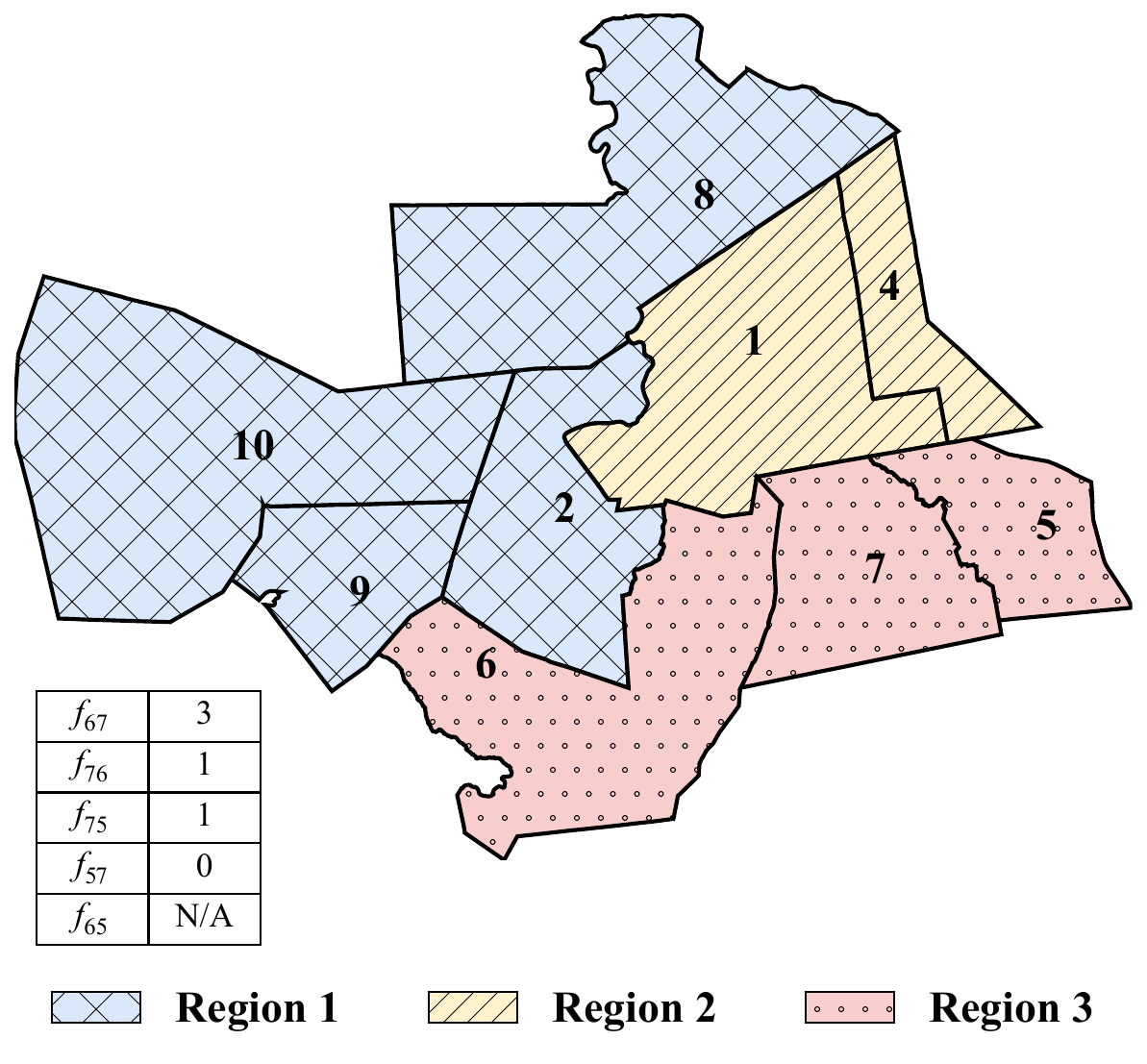}
    \caption{Connected Region 1 and Region 3}
    \label{fig:connected}
  \end{subfigure}
  \hspace{0.02\textwidth}
  \begin{subfigure}{0.35\textwidth}
    \centering
    \includegraphics[width=\linewidth]{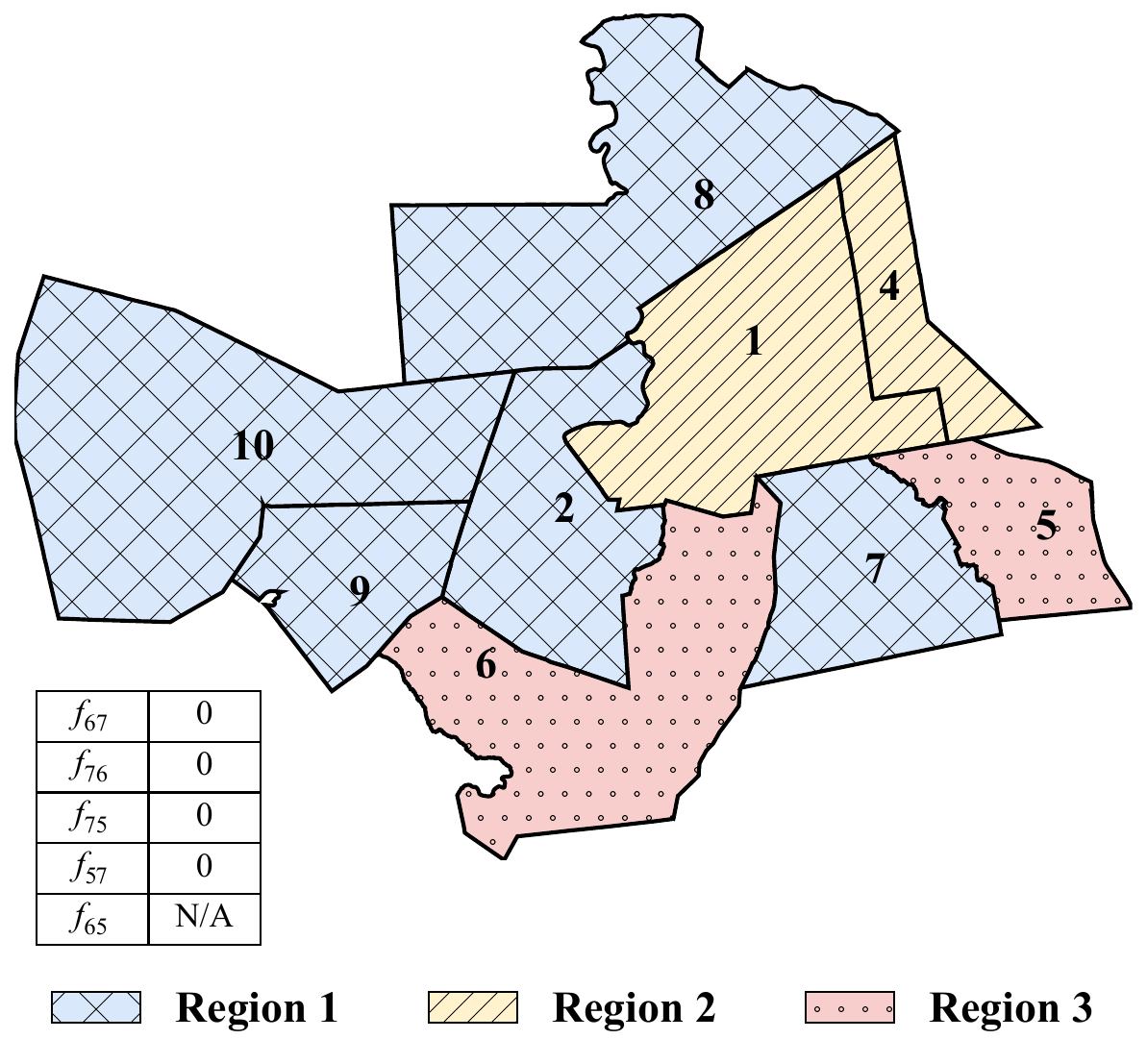}
    \caption{Disconnected Region 1 and Region 3}
    \label{fig:disconnected}
  \end{subfigure}
  \caption{Example of Connected and Disconnected Regions}
  \label{fig:figure2}
\end{figure*}

\vspace{-12pt}
\section{Background and Problem Formulation}



\textbf{Quantum Modeling}.
Quadratic Unconstrained Binary Optimization (QUBO)~\cite{Glover2022QUBO} provides a mathematical framework where problems are expressed as the minimization of a quadratic function over binary variables. In a QUBO formulation, the objective is represented as a quadratic polynomial, allowing a wide range of combinatorial optimization problems to be reformulated into this standard model. Its simplicity and versatility have made QUBO a favored approach for mapping complex problems, such as scheduling, clustering, and portfolio optimization, onto specialized hardware like quantum annealers. 
QUBO models are compatible with both classical and quantum solvers~\cite{Callison2022Hybrid}, making them powerful to empower hybrid solvers for a variety of problems.

Discrete Quadratic Models (DQM)~\cite{dwave_dqm} extend the QUBO framework by allowing variables to take on multiple discrete values rather than being restricted to binary values. This generalization is particularly useful for problems that naturally involve multi-valued decisions, providing a richer and more flexible modeling approach.
DQM formulations are directly supported on D-Wave’s quantum hardware. By accommodating a broader range of constraints and variable domains, DQM enables the modeling of intricate systems in areas such as scheduling problems~\cite{Schworm2023} and community detection~\cite{Wierzbinski2023}.
This reflects the ongoing evolution of quantum computing methods, as researchers seek to solve real-world problems that extend beyond the binary scope of traditional QUBO models.


\vspace{4pt}
\noindent \textbf{Problem Formulation.} Without loss of generality, we use spatial regionalization, specifically the p-regions problem, as a prominent example of a spatial problem that requires spatial contiguity among spatial entities.
All variations of spatial regionalization require spatial contiguity~\cite{duque2011pregions, duque2012maxp, Alrashid2022SMP, Kang2022EMP, Liu2022PRUC}, and the p-regions problem~\cite{duque2011pregions} is a simple representative variation that focuses on this specific requirement, which is the main focus of this paper.
The p-regions problem is defined as follows:


\vspace{-4pt}
\begin{definition}[p-Regions Problem]
\label{def:pregions}
Given:
(1)~A set of $n$ spatial areas: $A = \{ a_{1}, a_{2}, ..., a_{n} \}$.
(2)~An integer $p$.
p-regions finds a partition of regions $R =  \{r_{1}, r_{2}, ..., r_{p}\}$ of size $p$, where each region $r_i$ is a non-empty set of spatially contiguous areas, $|r_i| \geq 1$, so that:
(i)~$r_i \cap r_j = \Phi, \forall r_i,r_j \in R \wedge \; i \neq j$, i.e., all regions are disjoint.
(ii)~$\bigcup_{i=1}^{p} r_i = A$.

(iii)~The heterogeneity of $R$ is measured by the total pairwise dissimilarity among all areas within $R$:
\[
\text{Hetero}(R) = \sum_{\substack{i, j \in R \\ i < j}} w_{ij},
\]
where $w_{ij}$ is the dissimilarity between areas $i$ and $j$. The objective is to minimize $\text{Hetero}(R)$.

\end{definition}

\vspace{-12pt}
\section{Flow-based Spatial Contiguity}

This section describes the high-level mechanism that inspires our quantum model.
Duque et. al.~\cite{duque2011pregions} model spatial contiguity using three different models.
Our quantum model is inspired by the flow-based mechanism, which extends the original Shirabe model~\cite{shirabe2005model}, as it performs the best empirically among the three models.
The flow model has the fewest constraints, so it scales up better given the limitations of existing quantum hardware.
In addition, we adapt techniques that reduce the number of variables to optimize the model for the quantum hardware.
Specifically, we pick one root for each region and limit flow variables only to neighboring spatial entities.
As the spatial connectivity between real entities is typically small, this significantly reduces the number of flow variables needed.
Without loss of generality, and for simplicity, we consider spatial entities as spatial areas, e.g., city blocks, counties, or states.

Shirabe's spatial unit allocation model~\cite{shirabe2005model} utilizes an embedded network flow approach.
Spatial areas are modeled as a network, forming a graph where spatial areas are represented as nodes and spatial neighborhood relationships are represented as edges.
So, two areas are connected only if they share common boundaries in space.
Shirabe's model designates a sink node and ensures that all allocated areas remain connected via flow paths directed toward the sink.
Compared to Shirabe's single-region model, our multi-region flow model extends the network flow concept by enabling multiple distinct networks, each corresponding to a separate region, thus extending the utility in more regionalization problems. For each region, a flow network is established to check the connectivity of the region.

The flow idea is simple.
Flow is conceptually similar to water moving through pipes between reservoirs, where the inflow and outflow of each area must satisfy specific conservation requirements.
Every flow network of a region designates a source area that emits flow throughout the region through the flow links, i.e., edges that connect neighboring areas.
When an area is assigned to that region, it must receive flow from the region's source.
If there is no flow path from the region's source, then this area is disconnected.
These flow paths are exclusively within the region and do not extend beyond its borders.







Figure~\ref{fig:figure2} shows an example for three regions.
In Figure~\ref{fig:connected}, Regions~1 and~3 are connected, while in Figure~\ref{fig:disconnected} they are disconnected by assigning area~7 to Region 1.
Focusing on Region~3 in both cases, we designate area~6 as Region~3's source.
When area~6 emits flow through the flow network in Figure~\ref{fig:connected}, this flow goes to area~7 and, in turn, to area~5.
So, every area in Region~3 receives a positive inflow, meaning it is a connected region.
Note that flow only happens between direct neighbors, so area~6 cannot emit flow directly to area~5, but it can emit it indirectly through area~7.
On the contrary, in Figure~\ref{fig:disconnected}, the flow does not go from area~6 to area~7 as it does not belong to Region~3 in this case.
As flow transfers only through neighboring links, area~5 does not receive any flow from area~7. So, the flow of Region~3 source (area~6) does not reach area~5, and it is marked disconnected.
This idea is modeled using discrete variables, as detailed in the following section, with the DQM model.

\section{DQM Formulation}

Spatial contiguity is a complex constraint and challenging to express in a mathematical model to fit a quantum computer.
We exploit recent developments on quantum annealer and hybrid quantum-classical systems such as: (1)~solvers for \emph{Discrete Quadratic Models} (DQM)~\cite{dwave_dqm}, a generalization of QUBO~\cite{Glover2022QUBO} that supports multi-label, discrete decision variables, and (2)~hybrid solvers, such as \emph{Constrained Quadratic Models (CQM)}~\cite{dwave2022CQM} and \emph{Nonlinear Hybrid Solvers} (NL-solvers)~\cite{Osaba2024DWave}, that are capable of incorporating integer, linear, and non-linear constraints alongside quadratic objectives. 
These solvers enable a broader range of real-world optimization tasks to be mapped onto a hybrid quantum-classical architecture~\cite{Callison2022Hybrid}.

To enforce spatial contiguity on a D-wave quantum computer, we encode the flow model in DQM.
Assuming we have $n$ areas and $p$ regions, we build DQM by representing each area as a discrete variable with $p$ states, formulating a quadratic objective function, and incorporating a penalty term to achieve spatial contiguity using flow contiguity constraints. 
Let \[I=\{1,2,\dots,n\}\] be the set of areas. For each area \(i\in I\) we introduce the assignment variable
\[d_i\in\{1,2,\dots,p\},\]
so that \(d_i=k\) means that area \(i\) is assigned to region \(k\), \(1 \leq k \leq p\).

Let \(E\) denote the set of directed edges \((i,j)\) representing neighboring (contiguous) pairs; that is \(N(i)\) is a set of \(i\)'s neighbors, \(j\in N(i)\) if areas \(i\) and \(j\) share a common border.

\subsection*{Auxiliary Variables}

\begin{itemize}
    \item Introduce a discrete variable
    \[
    f_{ij}\ge 0,
    \]
    representing flow from area \(i\) to area \(j\), only if \((i,j) \in E\), i.e., \(i\) and \(j\) are neighbors. This significantly reduces the number of flow variables due to the low degree of connectivity between real spatial entities.
    
    \item Define an auxiliary variable \(u_{ij}\ge 0\) to examine whether two areas belong to the same region, \(u_{ij}=|d_i-d_j|\); hence, if \(d_i=d_j\) then \(u_{ij}=0\), meaning area \(i\) and area \(j\) belong to the same region. \(u_{ij}\) could help us determine the area relation between and inside regions.
\end{itemize}

\subsection*{Intra-Region Flow Constraint}

To ensure that flow is allowed only between areas assigned to the same region, we require that for each \((i,j)\in E\):
\[
f_{ij}\le M\,(1 - u_{ij}),
\]
where \(M\) is a suitably large constant. It worth noting that if \(d_i\neq d_j\) then \(u_{ij}\ge 1\) so that \(1-u_{ij}\le 0\) and thus \(f_{ij}=0\).

\vspace{-4pt}
\subsection*{Flow Contiguity Constraint}

For each region \(j\in\{1,\dots,p\}\), a unique root (source) area \(r(j)\) is predetermined (so that \(d_{r(j)}=j\)). We employ the scattered-seeding that enforces a minimum distance among the selected roots, ensuring that the subsequent regions remain spatially well-separated rather than being densely clustered. Then, for every area \(i\) assigned to region \(j\) the following holds:

\[
Inflow - Outflow = \]
\[\sum_{j\in N(i)} f_{ji} - \sum_{j\in N(i)} f_{ij} =
\begin{cases}
1, & \text{if } i\neq r(k),\\[1mm]
\displaystyle 1-\Bigl(\sum_{i\in I} \mathbf{1}\{d_i=k\}\Bigr) , & \text{if } i=r(k),
\end{cases}
\]
where \(\mathbf{1}\{d_i=j\}\) is an indicator function equal to 1 if \(d_i=j\) and 0 otherwise.
The first term represents the inflow, and the second term represents the outflow. For non-root nodes, the net inflow is 1; whereas for the root node, it must supply exactly enough flow to all other nodes in the region. In this way, by enforcing a balance between inflows and outflows, we ensure that each region forms a connected component, thereby achieving connectivity.

\begin{theorem}
If each region satisfies the flow conservation constraints, where the source satisfies and every other area satisfies, then the region must be connected.
\end{theorem}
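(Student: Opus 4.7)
The plan is to prove the contrapositive via a flow-balance telescoping argument: if the region is disconnected, the conservation equations cannot all hold simultaneously. The key intuition is that summing the net inflow equations over any subset of the region causes all internal flows to cancel, leaving only boundary contributions; if a subset is disconnected from the rest of the region, there are no boundary contributions, which forces a size-zero sum that contradicts the non-root constraint value of~$1$.

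First I would fix a region $k$ with root $r(k)$ and let $R_k=\{i \in I : d_i=k\}$ denote the areas assigned to region $k$, viewed as the vertex set of the subgraph induced by $E$ restricted to intra-region edges. Suppose, for contradiction, that $R_k$ is disconnected. Then there exists a non-empty subset $S\subseteq R_k$ with $r(k)\notin S$ such that $S$ has no edge in $E$ to $R_k\setminus S$. The next step is to sum the flow conservation equation over all $i\in S$:
\[
\sum_{i\in S}\Bigl(\sum_{j\in N(i)} f_{ji} - \sum_{j\in N(i)} f_{ij}\Bigr) \;=\; \sum_{i\in S} 1 \;=\; |S|.
\]
On the left-hand side, each edge $(a,b)$ with both endpoints in $S$ appears once as an inflow for $b$ and once as an outflow for $a$, so its contribution cancels. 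The only surviving terms correspond to edges with exactly one endpoint in $S$, i.e., edges crossing from $S$ to $I\setminus S$.

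Then I would analyze these crossing edges using the intra-region flow constraint $f_{ij}\le M(1-u_{ij})$. Any crossing edge goes either to $R_k\setminus S$ or to $I\setminus R_k$. By the choice of $S$, there are no edges (in $E$) to $R_k\setminus S$, so that case contributes nothing. For edges to areas outside $R_k$, we have $d_i\neq d_j$, hence $u_{ij}\ge 1$, which forces $f_{ij}=f_{ji}=0$. Therefore the entire left-hand side collapses to $0$, giving $0=|S|\ge 1$, a contradiction. Hence $R_k$ must be connected.

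The main obstacle I anticipate is handling the asymmetry of the flow variables and making precise why edges to $I\setminus R_k$ cannot contribute: this requires invoking the intra-region flow constraint rather than pure conservation, so the theorem really relies on both constraints together, and the statement should be read with that coupling in mind. A secondary subtlety is verifying that summing the root's equation is not needed here, since the contradiction is obtained on a subset disjoint from $r(k)$; if one instead summed over all of $R_k$, the root term $1-|R_k|$ exactly balances the $|R_k|-1$ non-root ones, yielding the trivially true global identity $0=0$, which is why restricting to the disconnected piece $S$ is essential.
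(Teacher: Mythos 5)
Your proof is correct and follows essentially the same strategy as the paper's: assume disconnectedness, sum the flow conservation equations over a component $S$ not containing the root to obtain $|S|$, and derive a contradiction because no flow can cross into $S$. In fact your write-up is more complete than the paper's, since you explicitly invoke the intra-region constraint $f_{ij}\le M(1-u_{ij})$ to rule out flow arriving from areas in \emph{other} regions adjacent to $S$ — a step the paper's proof leaves implicit — and you correctly observe that summing over all of $R_k$ would only yield the vacuous identity $0=0$.
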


\begin{proof}
Assume, for contradiction, that region is \emph{not} connected. That is, there exists a non-empty subset  of areas $S$ such that:

\begin{itemize}
\item  The set of areas $S$ forms a connected component that is \emph{disjoint} from the designated source node $s$ in the region source.
\item No selected edge (i.e., an edge with nonzero flow) connects any node in $S$ with any node outside of $S$ in the same region $R$, where $R$ represents all areas assigned to nodes in the region.
\end{itemize}

Because every area in satisfies, summing the flow conservation constraints over all areas in, we obtain:
\begin{equation}
\sum_{i\in S} (\text{Inflow} - \text{Outflow}) = |S|.
\end{equation}
However, since it is disconnected from the rest of the region, no flow enters or exits $S$, meaning the total across is zero: This contradicts the previous equation.
Thus, our assumption that this region is disconnected must be false. It follows that every area in a region is connected to the source if the flow constraint is met, ensuring spatial contiguity.
\end{proof}

Figure~\ref{fig:figure2} illustrates two scenarios:
(a) a connected Region~3, and (b) a disconnected Region~3.
In the connected case, all areas in Region~1 have $d_i = 1$, 
while areas $6$ and $9$ differ by $u_{69} = |d_{6} - d_{9}| = |3 - 1| = 2 \neq 0$, 
indicating they belong to different regions. 
Area~6 is designated as Region~3's source.
Figure~\ref{fig:connected} shows flow $f_{ij}$ values that satisfy the flow constraint (each non-root area has \text{Inflow} - \text{Outflow} = 1.).
In Figure~\ref{fig:connected}, area~6 is the source node, and $\text{Inflow} - \text{Outflow} = f_{76} - f_{67} = 1 - 3 = -2 = 1 - |Region~3|$; for area~7, $\text{Inflow} - \text{Outflow} = f_{67} + f_{57} - f_{76} - f_{75} = 3 + 0 - 1 - 1 = 1$; and for area~5, $\text{Inflow} - \text{Outflow} = f_{75} - f_{57} = 1 - 0 = 1$.
In this case, all areas in the region meet the flow contiguity constraint, so the region is connected.
In Figure~\ref{fig:disconnected}, the region is disconnected, and $\text{Inflow} - \text{Outflow} = f_{67} - f_{76} = 0 - 0 = 0$; for area~7, $\text{Inflow} - \text{Outflow} = f_{67} + f_{57} - f_{76} - f_{75} = 0 + 0 - 0 - 0 = 0$; and for area~5, $\text{Inflow} - \text{Outflow} = f_{75} - f_{57} = 0 - 0 = 0$.

\vspace{-4pt}
\subsection*{Objective Function}

A common objective is to minimize the overall inter-regional dissimilarity. For example, if \(w_{ij}\) is a measure of dissimilarity between areas \(i\) and \(j\), the objective function is
\[
\text{Minimize}\quad \sum_{k=1}^{p}\sum_{\substack{i,j\in I \\ i<j}} w_{ij}\,\mathbf{1}\{d_i=k\}\,\mathbf{1}\{d_j=k\}.
\]

The objective function is implemented in DQM by using the \texttt{{set\_quadratic\_case}} method to set biases for pairs of areas when they are assigned to the same region, aligning with the indicator function approach.

\subsection*{Complete DQM Formulation}

\[
\textbf{Decision Variables:}
\]
\[
d_i \in \{1,\ldots,p\}, \quad \forall\, i \in I
\]
\[
f_{ij} \ge 0, \quad \forall\,(i,j) \in E
\]
\[
u_{ij} \ge 0, \quad \forall\,(i,j) \in E
\]

\[
\textbf{Objective Function:}
\]
\[
\min \quad \sum_{j=1}^{p}\sum_{\substack{i,k\in I \\ i<k}} w_{ik}\,\mathbf{1}\{d_i=j\}\,\mathbf{1}\{d_k=j\}
\]

\[
\textbf{Subject to:}
\]
\[
u_{ij} \ge \lvert d_i - d_j\lvert, \quad \forall\,(i,j) \in E
\]

\[
f_{ij} \le M\,(1 - u_{ij}), \quad \forall\,(i,j) \in E
\]
\[
\sum_{j\in N(i)} f_{ji} - \sum_{j\in N(i)} f_{ij} = 1, \quad \forall\, i\in I \text{ with } i \neq r(d_i)
\]
\[
\sum_{j\in N(r(k))} f_{r(k)j} - \sum_{j\in N(r(i))} f_{j,r(k)} = \left(\sum_{i\in I} \mathbf{1}\{d_i=k\}\right) - 1. 
\]

\vspace{4pt}
\noindent In the above model:
\begin{itemize}
    \item The \(d_i\) are the only assignment variables (one per area).
    \item The variables \(f_{ij}\) and \(u_{ij}\) (for \((i,j)\in E\)) enforce that flow may occur only between areas in the same region and that a spanning-tree-like structure is formed within each region.
    \item The flow conservation constraints ensure that every non-root area receives one unit of flow and that the designated root supplies \(\Bigl(\sum_{i\in I}\mathbf{1}\{d_i=j\}\Bigr)-1\) units.
\end{itemize}

\section{QUBO Formulation}
After DQM formulation, the subsequent step involves its adaptation for quantum annealing, through QUBO representation. D-Wave systems, for instance, can directly transform smaller-scale DQM instances into the QUBO format. For larger, more complex models that exceed the direct capacity of the QPU, Dwave Leap Solver can decompose the large-scale DQM into multiple, smaller QUBO subproblems that can then be individually processed and their solutions potentially recombined.

\subsection*{Decision Variables}
The original assignment variable is 
\[
d_i \in \{1,\ldots,p\}, \quad \forall\, i \in I.
\]
For the assignment variable, we encode $d_i$  with binary variables:
\[
x_{ik} =
\begin{cases}
1, & \text{if } d_i = k,\\[1mm]
0, & \text{otherwise},
\end{cases}
\quad \forall\, i\in I,\; k=1,\ldots,p.
\]
To guarantee a unique assignment for each $i$, i.e., each area must only be assigned to one region, we require
\[
\sum_{k=1}^{p} x_{ik} = 1, \quad \forall\, i \in I,
\]
which is enforced via the penalty term
\[
\lambda_1 \left(\sum_{k=1}^{p} x_{ik} - 1\right)^2.
\]

\subsection*{Auxiliary Variables}
For modeling flows over network edges, we introduce flow variables
\[
f_{ij} \ge 0, \quad \forall\, (i,j) \in E.
\]
Since $f_{ij}$ is integer-valued, we encode it as a binary expansion:
\[
f_{ij} = \sum_{\ell=1}^{L} 2^{\ell-1}\, y_{ij}^{(\ell)}, \quad
y_{ij}^{(\ell)} \in \{0,1\},
\] 
with $L$ chosen such that $2^L-1\ge M$, where $M$ is an upper bound 
on the flow.

In addition, we introduce auxiliary variable $u_{ij,k}$ and penalty $P_{ij,k}(x,u)$ to keep the model formulation quadratic:
\[
u_{ij,k} \;=\; x_{ik}\,x_{jk}, 
\qquad
P_{ij,k}(x,u) \;=\; x_{ik}x_{jk}
  -2\,x_{ik}u_{ij,k}
  -2\,x_{jk}u_{ij,k}
  +3\,u_{ij,k}.
\]
To link the flows with the assignments, we require that 
\emph{if} $x_{ik}\neq x_{jk}$, then $f_{ij}$ must be 0. this gives the penalty term:
\begin{align*}
&\lambda_2\,f_{ij}\!\left(1-\sum_{k=1}^{p}x_{ik}x_{jk}\right)
   +\lambda_2\sum_{k=1}^{p}P_{ij,k}(x,u)\\[2pt]
&=\;\lambda_2\sum_{(i,j)\in E}
      \Bigl[
        f_{ij}\!\left(1-\sum_{k=1}^{p}u_{ij,k}\right)
        +\sum_{k=1}^{p}P_{ij,k}(x,u)
      \Bigr].
\end{align*}

\subsection*{Complete QUBO Formulation}
The original objective is to minimize
\[
\min \; \sum_{k=1}^{p} \sum_{i<j} w_{ij}\, x_{ik}\, x_{jk}.
\]
About flow conservation constraints. For 
each non-root node $i$, 
\[
\sum_{j\in N(i)} f_{ji} - \sum_{j\in N(i)} f_{ij} = 1, 
\quad \forall\, i \in I\setminus\{r(d_i)\},
\]
with an associated penalty term
\[
\lambda_3 \left(\sum_{k\in N(i)} f_{ki} - \sum_{k\in N(i)} f_{ik} - 1\right)^2.
\]
For each facility (with designated root $r(k)$), the conservation 
constraint is
\[
\sum_{j\in N(r(k))} f_{r(k)j} - \sum_{j\in N(r(k))} f_{j,r(k)} = 
\left(\sum_{i\in I} x_{ik}\right) - 1, \quad \forall\, k=1,\dots,p,
\]
and its penalty term is
\[
\lambda_4 \left(\sum_{j\in N(r(k))} f_{r(k)j} - 
\sum_{j\in N(r(k))} f_{j,r(k)} - \left(\sum_{i\in I} x_{ik} - 1\right)
\right)^2.
\]

Thus, the complete QUBO objective function is formulated as
\begin{align*}
\min_{x,y}\quad Q(x,y) =\; & \sum_{k=1}^{p} \sum_{i<j} w_{ij}\,
x_{ik}\, x_{jk} 
+ \lambda_1 \sum_{i\in I} \left(\sum_{k=1}^{p} x_{ik} - 1\right)^2 \\[1mm]
&+ \;\lambda_2\sum_{(i,j)\in E}
      \Bigl[
        f_{ij}\!\left(1-\sum_{k=1}^{p}u_{ij,k}\right)
        +\sum_{k=1}^{p}P_{ij,k}(x,u)
      \Bigr] \\[1mm]
&+ \lambda_3 \sum_{i\in I\setminus\{r(d_1),\dots,r(d_p)\}} \left(
\sum_{k\in N(i)} f_{ki} - \sum_{k\in N(i)} f_{ik} - 1\right)^2 \\[1mm]
&+ \lambda_4 \sum_{k=1}^{p} \left(
    \sum_{j\in N(r(k))} f_{r(k)j} - \sum_{j\in N(r(k))} f_{j,r(k)}
\right. \\[0.5mm]
&\hspace{3.5cm} \left.
    - \left(\sum_{i\in I} x_{ik} - 1\right)
\right)^2.
\end{align*}

Each penalty coefficient $\lambda_i$ is chosen sufficiently high to enforce large cost on constraint violations and ensure spatial contiguity.

\section{Model Scalability: A Hybrid Approach – The Seeding-based Spatial Regionalization}


Flow-based modeling of spatial contiguity typically requires a quadratic number of variables to model flow between all pairs of spatial areas.
Our proposed model optimizes this by representing only pairs of areas that are actual spatial neighbors (as shown in Figure~\ref{fig:figure2}) to exploit a lower degree of connectivity between spatial areas in reality and reduce the total number of variables and qubits needed.
However, given the existing limitations of quantum hardware, this could still be expensive for large-scale datasets, making it infeasible to directly deploy a model for the entire dataset on a QPU using either DQM or QUBO formulations. To address this, we introduce a seeding strategy that enables using our model to solve large-scale problems within a hybrid classical-quantum approach by dividing them into smaller subproblems. 

In the context of spatial regionalization, our use case in this paper, seeding-based approaches have shown the best scalability by far for large datasets~\cite{Aydin02112021, wei2021efficient, AlrashidLM23, Alrashid2022SMP, Liu2022PRUC, Kang2022EMP}, compared to graph partitioning approaches~\cite{Martins2006MST, Aydin2018SKATERCON, Guo2008} and much more scalable than MIP-based models~\cite{duque2011pregions, duque2012maxp, Kang2022EMP, Liu2022PRUC}.
Seeding goes through a common optimization framework that consists of different subproblems:
(1)~A growing phase that finds an initial solution through growing the required set of regions based on a set of seed areas.
(2)~A local optimization phase that optimizes the initial solution quality through heuristics to produce the final approximate solution.
For the problem of p-regions, the number of seed areas is p,
and each seed corresponds to one region. After seed selection, the
region grows by subsequently assigning neighboring areas. The
region-growing phase ensures spatial contiguity by restricting the
assignment to adjacent areas. More detailed steps are available in our supplementary material~\cite{chang2025quantummodelingspatialcontiguity}.

Finding \(p\) regions over a large dataset can be divided into:
(1)~seed selection (on QPU),
(2)~single-region growing, for \(p\) regions (on CPU),
(3)~combining the initial solution (on CPU),
(4)~determining movable areas (on CPU), and
(5)~optimizing the initial solution to get the final solution (on QPU).
We will limit the rest of our discussion to the quantum-driven steps for space limitations.

\textbf{Seed selection}.
A classic optimization for seed selection maximizes the minimum distance between seeds~\cite{Liu2022PRUC}, so seeds are spread enough to give the most room to find alternative solutions.
This optimization is a strong candidate for using quantum optimization by modeling the seed selection problem into a max-min dispersion problem~\cite{Yuki2024Disp}. Then we put high-quality seeds into region growing phase.


\textbf{The quantum optimization phase}.
This phase exploits quantum optimization to improve the objective function quality through exploring alternative solutions that are generated from the initial solution. Because we have limited the number of variables by only focusing on \emph{movable areas}, we can adapt to the limitation of small quantum hardware.
To generate alternative solutions, \emph{movable areas} that are located at each region's border are identified, so that they are not articulation areas that violate the spatial contiguity if reassigned.
We typically use Tarjan’s algorithm~\cite{Tarjan1972DFS} to identify articulation areas and filter them out.
Then, alternative solutions are combinations of the initial solution and the valid movable areas.
Quantum Annealing is powerful to explore many alternative solutions simultaneously and find an optimized one quickly.
\section{Conclusion}

This paper models the spatial contiguity constraint using quantum modeling to explore the potential quantum advantage in regionalization problems.
Specifically, we formulate a flow constraint in DQM to guarantee spatial contiguity, significantly reducing the required number of variables and effectively enabling quantum annealing to explore a broader solution space.  
Our ongoing work will empirically validate our model and extend our hybrid approach to more spatial optimization problems. 

\bibliographystyle{ACM-Reference-Format}
\bibliography{sample-base}

\appendix

\end{document}